\documentclass[11pt]{article}

\usepackage{amsmath}
\usepackage{amssymb}
\usepackage{amsthm}
\usepackage{mathtools}
\usepackage[shortlabels]{enumitem}

\usepackage{bm}              
\usepackage{bbm}             
\usepackage{mathrsfs}

\usepackage{graphicx}
\usepackage{float}
\usepackage{tikz}
\usetikzlibrary{arrows.meta,calc,shapes,decorations.pathmorphing}

\usepackage{subcaption}

\usepackage{authblk}

\usepackage{geometry}
\usepackage{hyperref}
\hypersetup{
	colorlinks = true,
	linkcolor  = blue,
	citecolor  = blue,
	urlcolor   = blue
}

\usepackage{cleveref}

\newtheorem{theorem}{Theorem}[section]

\usepackage{xcolor}

\usepackage{todonotes}

\begin{document}
	
	\title{{\bf ESAR: Event-Based Synthetic Aperture Reconstruction}\thanks{This work is partially supported by the Office of Naval Research (ONR) under Award NO: N00014-24-1-2147. NSF grant DMS-2408877, and the Air Force Office of Scientific Research (AFOSR) under Award NO: FA9550-25-1-0231}}

    \author[1]{Harbir Antil}
    \author[1]{Daniel Blauvelt}
    \author[1]{David Sayre}

    \affil[1]{Center for Mathematics and Artificial Intelligence and Department of Mathematical Sciences, George Mason University,  Fairfax, Virginia 22030}
    
	\date{}
	\maketitle
	
	\begin{abstract}
		Event cameras report asynchronous polarity events when changes in log--radiance exceed a fixed contrast threshold, producing signed temporal contrast measurements rather than conventional image frames. We formulate monocular event-based imaging as a synthetic-aperture inverse problem for a static ground-domain log--radiance field $\theta \in \mathbb{R}^{N_g}$. Instead of reconstructing a latent pixel-time volume $v \in \mathbb{R}^{N_pN_t}$, we impose the geometric relation $v=P\theta$, where $P$ maps the fixed scene into motion-dependent latent views. Aggregating events over finite time intervals gives the linearized model
		\[
		AP\theta = b+\eta,
		\]
		where $A$ is a temporal differencing operator, $b$ contains signed binned event counts, and $\eta$ represents measurement and modeling errors. This decomposition exposes a synthetic-aperture structure: under near-nadir motion, successive projections are approximately shifted views of a common scene, while the composite operator $AP$ remains ill-conditioned because it combines spatial averaging with temporal differencing. We therefore use regularized inversion to recover $\theta$. Numerical experiments on simulated data and real near-nadir Falcon Neuro event data show that the proposed $\theta$-based formulation recovers coherent large-scale spatial structure, relative to dynamic latent-image and learned event-reconstruction baselines, while suppressing fine-scale texture.
	\end{abstract}
	
	\section{Introduction}
	
	Neuromorphic event cameras differ fundamentally from conventional frame-based
	imagers in that they do not measure absolute intensity at fixed sampling
	intervals. Instead, each pixel independently reports asynchronous ``events''
	whenever the local change in log--radiance exceeds a fixed contrast threshold
	\cite{gallego2020survey}. This sensing paradigm enables extremely high temporal
	resolution, high dynamic range, and low latency, but it also complicates
	downstream inference: the sensor outputs a sparse stream of signed contrast
	changes rather than a sequence of images.
	
	A substantial body of recent work has addressed the reconstruction of intensity
	images or videos from event data. Existing approaches include deep learning
	methods
	\cite{Scheerlinck20wacv,paredes2021back,wang2019event,yang2023learning},
	variational and optimization-based formulations
	\cite{pan2019blurry,Antil_2023,zhang2022formulating}, and hybrid methods that
	fuse events with conventional frame measurements. While these methods have
	demonstrated strong empirical performance, many rely on learned priors,
	additional sensing modalities, or motion models that are difficult to interpret
	and analyze theoretically.
	
	Most of these approaches reconstruct a latent image or video sequence
	\[
	v \in \mathbb{R}^{N_pN_t},
	\]
	so that each pixel-time pair is treated as an independent unknown. This is also
	the perspective adopted in \cite{Antil_2023}, where the reconstruction target is
	a time-indexed latent signal constrained by event measurements. The present work
	instead parameterizes the unknown as a static ground-domain log--radiance field
	\[
	\theta \in \mathbb{R}^{N_g},
	\]
	with the latent sequence induced geometrically through
	\[
	v=P\theta.
	\]
	This parameterization is the central modeling distinction of the paper. It
	reduces the degrees of freedom from a pixel-time volume to a single scene
	representation, enforces geometric consistency across views through the
	projection operator $P$, and naturally extends to multi-pass or multi-sensor
	settings in which several trajectories observe the same underlying scene. While
	one could impose static-scene constraints inside a $v$-based formulation, the
	inverse problem would still be posed over the latent pixel-time volume. Here the
	static-scene assumption is built directly into the unknown, so the geometry and
	identifiability structure are expressed at the level of $\theta$.
	
	In parallel, synthetic aperture radar (SAR) and related imaging modalities have
	shown that high-resolution scene reconstruction can be achieved by coherently
	integrating multiple views acquired along a platform trajectory
	\cite{curlander1991synthetic,oliver2004understanding}. In SAR, scene resolution
	is not limited by the instantaneous aperture size, but is determined by the
	diversity of observations accumulated over time. This perspective raises the
	question of whether a similar synthetic-aperture interpretation can be developed
	for monocular event-based sensing, particularly when platform motion induces
	repeated observations of a static scene.
	
	We develop an operator-theoretic framework that casts monocular event sensing as
	a synthetic-aperture inverse problem. The forward model is separated into a
	temporal differencing operator, which encodes the event-generation mechanism,
	and a spatial projection operator, which maps a static ground-referenced
	log--radiance field into latent views along the platform trajectory. This gives
	the linearized model
	\[
	AP\theta=b+\eta,
	\]
	where $\theta$ is the unknown scene log--radiance, $P$ encodes platform geometry
	and footprint integration, $A$ captures temporal differencing induced by event
	generation, $b$ contains signed temporally binned event measurements, and $\eta$
	collects measurement and modeling errors.
	
	The formulation reveals a close analogy with strip-map SAR imaging. Under
	near-nadir motion and mild regularity assumptions on the projection operator,
	the operators $\{P_k\}$ are approximately shifted versions of a common
	blur-and-sample operator. Consequently, successive rows of $AP$ encode shifted
	differences of spatial footprints, producing a synthetic-aperture effect driven
	by platform motion rather than antenna geometry. At the same time, $AP$ is
	ill-conditioned because it combines spatial averaging through $P$ with temporal
	differencing through $A$, so stable reconstruction requires regularization.
	
	The contributions of this paper are threefold:
	\begin{itemize}
		\item We formulate monocular event-based synthetic-aperture reconstruction
		through the static-scene parameterization $v=P\theta$, yielding the
		composite forward model $AP\theta=b+\eta$ for a ground-domain
		log--radiance field $\theta \in \mathbb{R}^{N_g}$.
		\item We analyze the shift structure and ill-conditioning of the composite
		operator $AP$ and derive a regularized reconstruction framework for stable
		inversion.
		\item We present synthetic-aperture reconstructions on both simulated data
		and real near-nadir Falcon Neuro event data, a neuromorphic payload aboard
		the International Space Station \cite{mcharg2022falcon}, and compare the
		proposed $\theta$-based reconstructions against dynamic latent-image and
		learned event-reconstruction baselines.
	\end{itemize}
	
	By framing event-based imaging through the lens of synthetic aperture inversion,
	this work connects neuromorphic sensing, inverse problems, and SAR-style imaging,
	while remaining compatible with both model-based and learned extensions.
	
	\section{Forward Model for Event Synthetic Aperture}
	\label{sec:forward_model}
	
	We now build the forward model in one pass. We begin from the temporal
	event-modeling framework inherited from prior latent-sequence formulations,
	then impose the static-scene geometry through the projection operator $P$,
	which leads to the inverse problem
	\[
	AP\theta = b + \eta
	\]
	that we use throughout the remainder of the paper.
	
	\subsection{Temporal Event Model}
	\label{sec:temp_event_model}
	
	Event cameras do not measure absolute brightness. Instead, each pixel reports
	asynchronous events when the change in log--radiance exceeds a fixed contrast
	threshold \cite{gallego2020survey}. This subsection introduces a discrete temporal model for event sensing and defines the noise term $\eta$ that collects the resulting modeling errors.
	
	\paragraph{Log--Radiance and Threshold Events.}
	
	Let $q_\rho(t) > 0$ denote the true physical radiance at pixel $\rho$ and time
	$t \in [0,T]$, and define the corresponding true log--radiance
	\[
	\ell_\rho(t) := \log q_\rho(t).
	\]
	The positivity assumption ensures that the log--radiance is well defined.
	
	An event is generated at pixel $\rho$ whenever the accumulated change in
	$\ell_\rho(t)$ since the previous event crosses a fixed contrast threshold
	$c>0$. Let $\{s_j\}$ denote the (strictly increasing) sequence of event times at pixel $\rho$, and let $p(s_j)\in\{-1,+1\}$ denote the polarity of the event occurring at time $s_j$, indicating the sign of the change
	\cite{gallego2020survey,gallego2018contrast}. Individual events do not encode the exact magnitude of the log--radiance increment; they provide only signed threshold crossings.
	
	\paragraph{Binned Events and Additive Error Model.}
	
	Fix a temporal grid $0 = t_1 < t_2 < \cdots < t_{N_t}$.
	For each pixel $\rho$ and interval $(t_k,t_{k+1}]$, define the signed binned event measurement
	\[
	b_{\rho,k}
	=
	\sum_{s_j \in (t_k,t_{k+1}]} p(s_j).
	\]
	
	The net change in true log--radiance over the interval $(t_k,t_{k+1}]$ can be written as a sum of incremental changes between successive events occurring in that interval. To capture threshold variability, missed events, and temporal discretization in a single additive term, let $\eta_{\rho,k}$ denote the aggregate error over the interval $(t_k,t_{k+1}]$. Then
	\[
	\ell_\rho(t_{k+1}) - \ell_\rho(t_k)
	=
	c\,(b_{\rho,k} + \eta_{\rho,k}).
	\]
	
	Thus, while events do not provide direct access to absolute log--radiance values, their aggregation over a temporal bin yields the corresponding log--radiance difference up to the scaling factor $c$ and the additive error $\eta_{\rho,k}$.
	
	\paragraph{Latent Variables and Scaling.}
	
	Following prior work, we introduce discrete latent variables $v_{\rho,k}$.
	To obtain a clean linear model, we normalize by the contrast threshold and define
	\[
	v_{\rho,k} := \frac{\ell_\rho(t_k)}{c}.
	\]
	
	With this convention, the temporal relation becomes
	\[
	v_{\rho,k+1} - v_{\rho,k}
	=
	b_{\rho,k} + \eta_{\rho,k}.
	\]
	
	This normalization absorbs the contrast threshold into the latent variables and ensures that the forward model does not explicitly depend on $c$.
	
	\paragraph{Relation to Prior Latent-Sequence Models.}
	
	Prior work formulates binned event measurements through a latent time-indexed
	image sequence $v$ satisfying a temporal differencing relation of the form
	\[
	v_{\rho,k+1} - v_{\rho,k}
	=
	b_{\rho,k} + \eta_{\rho,k},
	\]
	an approach developed in the bilevel inverse formulation of
	\cite{Antil_2023} and the dynamic reconstruction framework of
	\cite{antil2024drnd}. We adopt this temporal event model as the starting point
	for our formulation, but depart from prior work in how the unknown is
	parameterized.
	
	Stacking $v_{\rho,k}$ across all pixels and times yields
	\[
	v \in \mathbb{R}^{N_p N_t},
	\]
	while stacking $b_{\rho,k}$ and $\eta_{\rho,k}$ yields
	\[
	b \in \mathbb{R}^{N_p (N_t-1)},
	\qquad
	\eta \in \mathbb{R}^{N_p (N_t-1)}.
	\]
	
	Let $A_{\mathrm{time}} \in \mathbb{R}^{(N_t-1)\times N_t}$ denote the forward
	difference operator and define
	\[
	A = A_{\mathrm{time}} \otimes I_{N_p}.
	\]
	Then the temporal model can be written compactly as
	\[
	Av = b + \eta.
	\]
	
	This temporal relation supplies the differencing operator $A$ inherited from
	prior latent-sequence models. We now incorporate the static-scene geometry
	through the projection operator $P$.
	
	\subsection{Static Scene Projection Model}
	\label{sec:projection}
	
	To encode the static-scene assumption directly in the unknown, we parameterize
	the scene by a ground-domain log--radiance field
	\[
	\theta \in \mathbb{R}^{N_g}
	\]
	and define the projection operator
	\[
	P : \mathbb{R}^{N_g} \to \mathbb{R}^{N_p N_t}.
	\]
	Here 
	Throughout, $\theta$ is understood in the same contrast-threshold-normalized
units as $v$ (Section~\ref{sec:temp_event_model}), i.e.\ physical
log--radiance divided by $c$, so that $v=P\theta$ requires no additional
scaling factor.
$P$ is the geometric component of the forward model: it maps the static scene
	parameter $\theta$ into the sequence of projected views to which the temporal
	differencing operator $A$ is applied.
	
	Instead of reconstructing the latent sequence $v$ directly, we impose the
	static-scene relation
	\[
	v = P\theta,
	\]
	which converts the prior temporal model into the composite synthetic-aperture
	equation
	\[
	AP\theta = b + \eta,
	\]
	used throughout the remainder of the paper. Here $\eta$ is understood to
	collect not only temporal event-model errors, but also the projection-side
	linearization and discretization errors introduced in this subsection. We now
	describe the imaging geometry that defines the rows of $P$ and the footprint
	structure of its time-indexed blocks $\{P_k\}$. Because each pixel footprint
	sweeps different ground regions as the camera moves, the stacked operator $P$
	exhibits a synthetic-aperture structure; its formal relationship to SAR is
	made precise in Section~\ref{sec:sar}.
	
	\paragraph{Imaging Geometry and Pixel Footprints.}
	
	We represent the ground log--radiance field on a fixed discretized grid consisting of $N_g$ surface elements. The unknown log--radiance is modeled as a finite-dimensional vector
	\[
	\theta = (\theta_1,\dots,\theta_{N_g})^\top \in \mathbb{R}^{N_g},
	\]
	where $\theta_i$ denotes the log--radiance associated with the $i$th surface cell.
	Each cell is represented by a point $X_i \in \mathbb{R}^2$, corresponding to its horizontal location in a world (ground) reference frame, for example the horizontal coordinates of the cell's geometric center.
	We orient this frame so that
	\[
	e_3=(0,0,1)^\top
	\]
	points in the vertical (altitude) direction, while the first two coordinates describe horizontal ground position.
	
	\medskip
	
	The camera is modeled as a calibrated pinhole system \cite{hartley_zisserman_2003}
	with intrinsic matrix
	\[
	K
	=
	\begin{pmatrix}
		f_x & 0   & c_x \\
		0   & f_y & c_y \\
		0   & 0   & 1
	\end{pmatrix}.
	\]
	Here $f_x$ and $f_y$ denote the effective focal lengths in pixel units along the horizontal and vertical image axes, and $(c_x,c_y)$ denotes the principal point (the projection of the camera center onto the image plane).
	
	At discrete time $t_k$, the camera pose is described by a rotation
	$R_k \in SO(3)$ and a platform position $C_k \in \mathbb{R}^3$, mapping world
	coordinates into the camera frame according to
	\[
	x^{\mathrm{cam}} = R_k (x^{\mathrm{world}} - C_k).
	\]
	
	Let $\rho$ denote a pixel index with image-plane coordinates
	\[
	\mathbf{x}_\rho = (x_\rho,y_\rho)^\top.
	\]
	Under the pinhole model, this pixel defines a ray with direction
	\[
	d_\rho
	=
	K^{-1}
	\begin{pmatrix}
		x_\rho\\
		y_\rho\\
		1
	\end{pmatrix}.
	\]
	
	This ray, together with the finite spatial extent of the pixel, induces a
	footprint on the ground surface. The sensor integrates radiance over this
	footprint before the logarithm enters the event model. To remain
	consistent with the log--radiance formulation above, we adopt a local
	linearization: when the radiance varies moderately across a localized
	footprint, the log of the footprint-averaged radiance is approximated by the
	corresponding weighted average of local log--radiance values. Under this
	approximation, the projected quantity $(P_k\theta)_\rho$ can be modeled
	directly from the ground log--radiance samples $\theta_i$ associated with cells whose
	representative points $X_i$ lie within or near the footprint. Any residual
	error from this linearization, together with footprint discretization error, is
	absorbed into the aggregate modeling term $\eta$ in the composite forward model
	introduced above.
	
	\paragraph{Footprint kernel and weights.}
	
	To model footprint averaging, we introduce a fixed kernel
	\[
	\kappa : \mathbb{R}^2 \to [0,\infty),
	\]
	which is assumed to be Lipschitz continuous, bounded, and compactly supported.
	
	For each pixel $\rho$ and time $t_k$, let $X_{\rho,k} \in \mathbb{R}^2$
	denote the center of the corresponding ground footprint. We define the
	associated footprint neighborhood as the discrete support of the kernel:
	\[
	\mathcal{N}_{\rho,k}
	=
	\{\, i \in \{1,\dots,N_g\} : \kappa(X_i - X_{\rho,k}) \neq 0 \,\}.
	\]
	
	We assume that the normalization denominator is uniformly bounded below,
	i.e., there exists $m>0$ such that
	\[
	\sum_{j\in\mathcal N_{\rho,k}} \kappa(X_j - X_{\rho,k}) \ge m
	\quad \text{for all } \rho,k.
	\]
	
	We define the normalized footprint weights by
	\[
	w_{\rho,k,i}
	=
	\frac{\kappa(X_i - X_{\rho,k})}
	{\sum_{j\in\mathcal N_{\rho,k}} \kappa(X_j - X_{\rho,k})},
	\qquad i \in \mathcal N_{\rho,k}.
	\]
	
	\paragraph{Construction and Sparsity of the Projection Operator.}
	
	For each time $t_k$, the projection operator
	\[
	P_k \in \mathbb{R}^{N_p \times N_g}
	\]
	maps $\theta$ to the corresponding latent image. Its $\rho$-th row is supported on $\mathcal{N}_{\rho,k}$, and is given by
	\[
	(P_k\theta)_\rho
	= \sum_{i\in\mathcal{N}_{\rho,k}} w_{\rho,k,i}\,\theta_i.
	\]
	Stacking the operators $P_k$ across all times yields
	\[
	P=
	\begin{pmatrix}
		P_1 \\ P_2 \\ \vdots \\ P_{N_t}
	\end{pmatrix}
	\in\mathbb{R}^{N_pN_t \times N_g}.
	\]
	Each row of each $P_k$ contains at most $|\mathcal{N}_{\rho,k}|$ nonzero entries, and therefore $P$ has at most
	\[
	N_pN_t \max_{\rho,k} |\mathcal{N}_{\rho,k}|
	\]
	nonzero entries. This sparsity is essential for efficient evaluation of the composite operator $AP$, and mirrors the sparsity patterns exploited in synthetic-aperture forward models
	\cite{curlander1991synthetic, jakowatz_sar_imaging}.
	
	\section{Synthetic-Aperture Structure of the Forward Operator}
	\label{sec:forward}
	
	With the composite forward model \(AP\theta = b + \eta\) established in Section~\ref{sec:forward_model}, we now analyze the structure of the signal operator \(AP\).
	This section develops the structural properties of \(AP\), establishes the approximate shift-invariance that arises in the nadir regime, and analyzes the operator's rank and identifiability. The resulting structure underlies the SAR comparison formalized later in Section~\ref{sec:sar} and also parallels optical synthetic-aperture imaging
	\cite{curlander1991synthetic, jakowatz_sar_imaging, carrara_spotlight_sar}.
	
	\subsection{Block Structure of the Composite Operator}
	
	Applying the temporal differencing operator to the projected view sequence gives
	\[
	(AP\theta)_{\rho,k}
	= (P_{k+1}\theta)_\rho - (P_k\theta)_\rho.
	\]
	Thus
	\[
	AP =
	\begin{pmatrix}
		P_2 - P_1 \\
		P_3 - P_2 \\
		\vdots \\
		P_{N_t} - P_{N_t-1}
	\end{pmatrix}
	\in \mathbb{R}^{N_p(N_t-1)\times N_g}.
	\]
	Hence the event stream measures temporal differences of successive spatial projections; Section~\ref{sec:sar} relates this difference structure to classical synthetic-aperture systems
	\cite{curlander1991synthetic}.
	
	\subsection{Synthetic-Aperture Interpretation}
	
	Event cameras do not provide absolute image-brightness measurements; instead,
	after temporal binning they constrain changes between the projected views
	$\{P_k\theta\}_{k=1}^{N_t}$ of the fixed ground log--radiance field
	$\theta \in \mathbb{R}^{N_g}$. Reconstruction therefore depends critically on
	how the projection operators $\{P_k\}$ evolve as the sensing platform moves.
	
	We consider a near-nadir flight regime in which the camera undergoes primarily
	horizontal motion with small attitude and altitude variations. In this setting,
	the geometric effect of platform motion is that each pixel footprint moves
	laterally across the ground with limited deformation. Thus the operators
	$\{P_k\}$ are expected to behave approximately as translated versions of a
	common kernel-weighted footprint operator.
	
	This behavior is the structural basis for the SAR analogy formalized in
	Section~\ref{sec:sar}. Each measurement observes the same scene from a shifted
	viewpoint, and the collection of these shifted observations encodes aperture
	diversity \cite{curlander1991synthetic,jakowatz_sar_imaging,carrara_spotlight_sar}.
	The following result makes this approximate shift-structure precise.
	
	For nonzero vectors $u,v$, let $\angle(u,v)$ denote the angle between them, and
	let $\|\cdot\|_F$ denote the Frobenius norm on matrices. For each pixel $\rho$ 
	and time $t_k$, let $\Omega_{\rho,k}\subset\mathbb{R}^2$
	denote the horizontal footprint of that pixel on the ground. The associated
	discrete neighborhood is given by 
	$\mathcal N_{\rho,k}$, and define
	\[
	\operatorname{diam}(\Omega_{\rho,k})
	=
	\sup_{x,y\in\Omega_{\rho,k}}\|x-y\|_2.
	\]
	
	\paragraph{Near-nadir regime.}
	We assume a near-nadir acquisition regime characterized by a small parameter
	$\varepsilon>0$ such that, for all relevant pixels $\rho$ and time indices $k$:
	
	\begin{enumerate}[(i)]
		\item \textbf{Viewing geometry.}
		The viewing directions remain close to nadir. Writing
$\hat d_{\rho,k}^{\mathrm{world}} := R_k^\top d_\rho/\|d_\rho\|_2$ for the
ray direction $d_\rho$ expressed in the world frame at time $t_k$:
		\[
		\angle(\hat d_{\rho,k}^{\mathrm{world}}, \pm e_3)\le \varepsilon,
		\qquad
		\|R_k - R_1\|_F \le \varepsilon.
		\]
		
		with the sign in $\pm e_3$ fixed by the nadir-pointing convention.

\item \textbf{Platform motion.}
		The platform motion is predominantly horizontal:
		\[
		C_k = (\xi_k,h_k), \qquad |h_k - h_1| \le \varepsilon,
		\]
		and we define the horizontal displacement
		\[
		\delta_k = \xi_k - \xi_1.
		\]
		
		\item \textbf{Terrain regularity.}
		Over each footprint domain $\Omega_{\rho,k}\subset\mathbb{R}^2$, the terrain can be locally represented as
		\[
		z = \zeta(x), \quad x \in \Omega_{\rho,k},
		\]
		with $\zeta \in C^2(\Omega_{\rho,k})$ satisfying
		\[
		\|\nabla \zeta\|_{L^\infty(\Omega_{\rho,k})} \le \varepsilon,
		\qquad
		\operatorname{diam}(\Omega_{\rho,k})
		\|D^2 \zeta\|_{L^\infty(\Omega_{\rho,k})} \le \varepsilon.
		\]
	\end{enumerate}
	
	Finally, define the footprint-center mismatch
	\[
	\Delta
	=
	\max_{\rho,k}
	\|X_{\rho,k} - (X_{\rho,1} + \delta_k)\|_2.
	\]
	
	We also assume that the footprint kernel $\kappa$ satisfies the hypotheses
	stated in Section~\ref{sec:projection}: it is Lipschitz, bounded, compactly
	supported, and its normalization denominator is uniformly bounded below. Let
	\[
	M:=\max_{\rho,k}|\mathcal N_{\rho,k}|
	\]
	denote the maximum footprint cardinality. We assume, as is natural for a
	localized footprint discretization, that the overlap multiplicity of the
	footprints is uniformly bounded: there exists $Q>0$ such that each ground index
	belongs to at most $Q$ footprint neighborhoods for a fixed time level.

The near-nadir regime (i)--(iii) above is the physical scenario in which the
footprint-center mismatch $\Delta$ is expected to be small; the theorem
below is stated, and proved, directly in terms of $\Delta$ itself, without
further use of (i)--(iii).
	
	\begin{theorem}[Approximate translation invariance of $P_k$]
		\label{thm:nadir}
		Given the footprint-center mismatch $\Delta$ defined above, there exists a family of linear operators
		\[
		\widetilde P_k \in \mathbb{R}^{N_p \times N_g}
		\]
		and a constant $C>0$, independent of $k$ and $\theta$, such that for all
		$\theta \in \mathbb{R}^{N_g}$,
		\[
		P_k \theta = \widetilde P_k \theta + r_k(\theta),
		\qquad
		\|r_k(\theta)\|_2 \le C \Delta \|\theta\|_2,
		\]
		where:
		\begin{itemize}
			\item $\widetilde P_k$ is obtained by translating the reference operator $P_1$
			by the horizontal displacement $\delta_k$,
			\item $\Delta = \max_{\rho,k} \|X_{\rho,k} - (X_{\rho,1} + \delta_k)\|_2$
			measures the footprint-center mismatch.
		\end{itemize}
	\end{theorem}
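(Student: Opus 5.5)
Define $\widetilde P_k$ row by row by translating the reference footprint center. For each pixel $\rho$, let $\widetilde X_{\rho,k}:=X_{\rho,1}+\delta_k$ and set
\[
\widetilde w_{\rho,k,i}=\frac{\kappa(X_i-\widetilde X_{\rho,k})}{\sum_{j}\kappa(X_j-\widetilde X_{\rho,k})},
\qquad
(\widetilde P_k\theta)_\rho=\sum_i \widetilde w_{\rho,k,i}\theta_i .
\]
This is precisely $P_1$'s kernel-weighted averaging with the footprint center moved by $\delta_k$, i.e.\ the ``translated reference operator''. The remainder is then $r_k(\theta)=(P_k-\widetilde P_k)\theta$, and the task reduces to bounding the operator norm $\|P_k-\widetilde P_k\|_2\le C\Delta$ with $C$ independent of $k$. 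One point needs care: the lower bound $m$ on the denominator is assumed only at the true centers $X_{\rho,k}$. I would handle this by restricting to $\Delta\le \Delta_0$ for a small $\Delta_0$. In that range the Lipschitz estimate below shows the translated denominator is at least $m/2$. For $\Delta>\Delta_0$ the claim is trivial: both operators are row-stochastic with bounded overlap, so $\|r_k(\theta)\|_2\le 2\sqrt{MQ}\,\|\theta\|_2\le (2\sqrt{MQ}/\Delta_0)\Delta\|\theta\|_2$.

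\textbf{Entrywise bound.} Let $L$ be the Lipschitz constant and $\|\kappa\|_\infty$ the sup of $\kappa$. Then for every $i$,
\[
|\kappa(X_i-X_{\rho,k})-\kappa(X_i-\widetilde X_{\rho,k})|\le L\Delta .
\]
Summing over the indices where either value is nonzero gives a denominator difference of at most $2ML\Delta$. This uses the fact that the translated support also has cardinality $O(M)$; a compactly supported kernel on a quasi-uniform grid makes the support count translation-stable, and I would state this as part of the bounded-$M$ assumption. Using $a/A-b/B=(a-b)/A+b(B-A)/(AB)$ with $A,B\ge m/2$, each weight difference satisfies
\[
|w_{\rho,k,i}-\widetilde w_{\rho,k,i}|\le C_1\Delta,
\qquad
C_1=\frac{2L}{m}+\frac{8ML\|\kappa\|_\infty}{m^2}.
\]
Each row of $P_k-\widetilde P_k$ has at most $2M$ nonzeros, so its $\ell^1$ row norm is at most $2MC_1\Delta$.

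\textbf{Column sums.} These need the overlap multiplicity $Q$. Each ground index $i$ lies in at most $Q$ neighborhoods of $P_k$ and, again for small $\Delta$, at most $O(Q)$ translated neighborhoods. The $\ell^1$ column norm is therefore at most $2QC_1\Delta$.

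\textbf{Conclusion.} The Schur test $\|B\|_2\le\sqrt{\|B\|_1\|B\|_\infty}$ then gives
\[
\|r_k(\theta)\|_2\le 2\sqrt{MQ}\,C_1\,\Delta\,\|\theta\|_2 .
\]
The constant depends only on $L,\|\kappa\|_\infty,m,M,Q$, so it is independent of $k$ and $\theta$.

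The main obstacle is the bookkeeping in the first step. The hypotheses ($m$, $M$, $Q$) are stated for the true footprints, not the translated ones. I would first try the small-$\Delta$ reduction above, using Lipschitz continuity to transfer the lower bound on the denominator and compact support to transfer the cardinality and overlap bounds. If that transfer is awkward, the fallback is the large-$\Delta$ trivial bound, which only inflates the constant.
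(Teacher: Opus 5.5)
Your argument is essentially the paper's: you use the same translated operator built from the centers $X_{\rho,1}+\delta_k$ and the same quotient-rule Lipschitz bound on the normalized weights, and your Schur-test aggregation is a repackaging of the paper's rowwise Cauchy--Schwarz plus bounded-overlap summation, giving the same constant up to absolute factors. Your handling of the fact that $m$, $M$, $Q$ are assumed only at the true footprints is more careful than the paper, which simply asserts $S(X)\ge m$ ``for all admissible centers'' and applies $M$, $Q$ to the translated and union supports; the one thing to add is that in your large-$\Delta$ branch $\widetilde P_k$ is row-stochastic only if every translated denominator is nonzero, so you still need that condition there, although it is much weaker than the paper's implicit one.
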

	
	\begin{proof}
		Recall that $P_k$ is defined rowwise by
		\[
		(P_k\theta)_\rho
		=
		\sum_{i\in\mathcal N_{\rho,k}} w_{\rho,k,i}\theta_i,
		\quad 
		\mbox{where} \quad 
		w_{\rho,k,i}
		=
		\frac{\kappa(X_i-X_{\rho,k})}
		{\sum_{j\in\mathcal N_{\rho,k}}\kappa(X_j-X_{\rho,k})}.
		\]
		
		For each $k$, define the translated reference footprint center
		\[
		\widetilde X_{\rho,k}:=X_{\rho,1}+\delta_k.
		\]
		Let $\widetilde{\mathcal N}_{\rho,k}$ be the discrete support of the kernel
		centered at $\widetilde X_{\rho,k}$, and define
		\[
		\widetilde w_{\rho,k,i}
		=
		\frac{\kappa(X_i-\widetilde X_{\rho,k})}
		{\sum_{j\in\widetilde{\mathcal N}_{\rho,k}}
			\kappa(X_j-\widetilde X_{\rho,k})},
		\qquad i\in \widetilde{\mathcal N}_{\rho,k}.
		\]
		The translated reference operator $\widetilde P_k$ is given by
		\[
		(\widetilde P_k\theta)_\rho
		=
		\sum_{i\in\widetilde{\mathcal N}_{\rho,k}}
		\widetilde w_{\rho,k,i}\theta_i.
		\]
		By construction, $\widetilde P_k$ is obtained by translating the reference
		kernel-weighted footprint operator at time $t_1$ by the horizontal displacement
		$\delta_k$.
		
		We estimate $P_k-\widetilde P_k$. Since
		\[
		\|X_{\rho,k}-\widetilde X_{\rho,k}\|_2
		\le \Delta,
		\]
		it remains to quantify how the normalized weights change under perturbations
		of the footprint center.
		
		For two admissible centers $X,Y$, define
		\[
		S(X):=\sum_{j\in\mathcal N(X)}\kappa(X_j-X),
		\qquad
		w(X,i):=\frac{\kappa(X_i-X)}{S(X)},
		\]
		where $\mathcal N(X)$ denotes the discrete support of the kernel centered at
		$X$. Extend $w(X,\cdot)$ by zero outside $\mathcal N(X)$.
		
		Using the assumptions that $\kappa$ is Lipschitz with constant $L_\kappa$,
		bounded by $\kappa_{\max}$, and that $S(X)\ge m>0$ for all admissible centers,
		we obtain, on any common finite support of cardinality at most $M$,
		\[
		\begin{aligned}
			|w(X,i)-w(Y,i)|
			&\le
			\frac{|\kappa(X_i-X)-\kappa(X_i-Y)|}{S(X)}
			+
			\kappa_{\max}
			\left|
			\frac{1}{S(X)}-\frac{1}{S(Y)}
			\right|  \\
			&\le
			\frac{L_\kappa}{m}\|X-Y\|_2
			+
			\frac{\kappa_{\max}}{m^2}
			|S(X)-S(Y)| .
		\end{aligned}
		\]
		Moreover,
		\[
		|S(X)-S(Y)|
		\le
		\sum_j |\kappa(X_j-X)-\kappa(X_j-Y)|
		\le
		M L_\kappa \|X-Y\|_2 .
		\]
		Therefore,
		\[
		|w(X,i)-w(Y,i)|
		\le
		\left(
		\frac{L_\kappa}{m}
		+
		\frac{\kappa_{\max}M L_\kappa}{m^2}
		\right)
		\|X-Y\|_2 .
		\]
		Applying this estimate with $X=X_{\rho,k}$ and
		$Y=\widetilde X_{\rho,k}$ gives
		\[
		|w_{\rho,k,i}-\widetilde w_{\rho,k,i}|
		\le
		L_w\Delta,
		\quad 
		\mbox{where} \quad
		L_w :=
		\frac{L_\kappa}{m}
		+
		\frac{\kappa_{\max}M L_\kappa}{m^2}.
		\]
		We now compare the corresponding rows. Extend both weight vectors by zero
		outside their supports and set
		\[
		S_{\rho,k}:=\mathcal N_{\rho,k}\cup \widetilde{\mathcal N}_{\rho,k}.
		\]
		Since each footprint contains at most $M$ grid indices, we have
		$|S_{\rho,k}|\le 2M$. Hence
		\[
		\begin{aligned}
			|(P_k\theta)_\rho-(\widetilde P_k\theta)_\rho|
			&\le
			\sum_{i\in S_{\rho,k}}
			|w_{\rho,k,i}-\widetilde w_{\rho,k,i}|\,|\theta_i|  \\
			&\le
			L_w\Delta
			\sum_{i\in S_{\rho,k}}|\theta_i|  \\
			&\le
			L_w\Delta\,\sqrt{2M}
			\left(
			\sum_{i\in S_{\rho,k}}|\theta_i|^2
			\right)^{1/2},
		\end{aligned}
		\]
		where the last step follows from Cauchy--Schwarz.
		
		Squaring and summing over $\rho$ yields
		\[
		\|P_k\theta-\widetilde P_k\theta\|_2^2
		\le
		2M L_w^2 \Delta^2
		\sum_\rho \sum_{i\in S_{\rho,k}} |\theta_i|^2 .
		\]
		By the bounded overlap assumption, each ground index $i$ appears in at most
		$Q$ sets $S_{\rho,k}$ for fixed $k$. Therefore,
		\[
		\sum_\rho \sum_{i\in S_{\rho,k}} |\theta_i|^2
		\le
		Q\|\theta\|_2^2.
		\]
		Consequently,
		\[
		\|P_k\theta-\widetilde P_k\theta\|_2
		\le
		\sqrt{2MQ}\,L_w\,\Delta\,\|\theta\|_2.
		\]
		
		Finally define
		\[
		r_k(\theta):=(P_k-\widetilde P_k)\theta.
		\]
		Then
		\[
		P_k\theta=\widetilde P_k\theta+r_k(\theta),
		\qquad
		\|r_k(\theta)\|_2\le C\Delta\|\theta\|_2,
		\]
		with $C=\sqrt{2MQ}\,L_w$, which proves the result.
	\end{proof}

	\section{Ill-Posedness of the Forward Model and the Role of Regularization}
	\label{sec:illposed}
	
	The forward model for event-based synthetic-aperture log--radiance reconstruction is
	\[
	AP\theta = b + \eta,
	\]
	where \(A\) is the temporal differencing operator from Section~\ref{sec:temp_event_model}
	and \(P\) is the spatial projection operator from Section~\ref{sec:projection}.
	Although linear, the operator \(AP\) is fundamentally ill-posed due to the
	combined effects of temporal differencing and spatial averaging. These features
	mirror conditioning challenges in classical inverse problems
	\cite{tarantola_inv_prob, engl1996regularization} and synthetic-aperture systems
	\cite{curlander1991synthetic, carrara_spotlight_sar}.
	
	This section characterizes these sources of ill-posedness and motivates the
	regularized formulation used for reconstruction. All operators are analyzed
	with respect to the Euclidean inner product on the discretized log--radiance space.
	
	\subsection{Rank, Nullspace, and Ill-Conditioning of \texorpdfstring{$AP$}{AP}}
	\label{sec:rank_conditioning}
	
	Ignoring the additive noise term, the forward model reads
	\[
	b = AP\theta, 
	\qquad 
	(AP\theta)_{\rho,k} = \bigl((P_{k+1}-P_k)\theta\bigr)_\rho.
	\]
	Thus identifiability and stability depend on the structure of the projection
	sequence $\{P_k\}$ and the temporal differencing operator $A$.
	Ill-posedness arises from two mechanisms: a geometric nullspace induced by
	insufficient viewpoint diversity and poor conditioning of the operators.
	
	\paragraph{Geometric nullspace and synthetic-aperture diversity.}
	
	By construction,
	\[
	AP\theta = 0 
	\quad \Longleftrightarrow \quad
	P_{k+1}\theta = P_k\theta \quad \forall k.
	\]
	Thus any log--radiance field producing identical projections across all
	viewpoints lies in $\ker(AP)$. The size of this nullspace depends on the
	variation in $\{P_k\}$.
	
	Under the near-nadir regime of Theorem~\ref{thm:nadir}, the projection operators satisfy
	\[
	P_k \theta = \widetilde P_k \theta + r_k(\theta),
	\qquad
	\|r_k(\theta)\|_2 \le C\Delta \|\theta\|_2,
	\]
	where $\widetilde P_k$ is obtained by translating a reference operator.
	Thus, to leading order, the projections behave as translated versions of a
	common footprint operator, and identifiability depends on the sequence of
	displacements $\{\delta_k\}$.
	
	If the displacements $\delta_k$ span a sufficiently large range, the operators
	$\{\widetilde P_k\}$ probe distinct regions of the scene, reducing the nullspace.
	Conversely, if $\delta_k$ is nearly constant, then $P_{k+1}\approx P_k$ and
	$AP$ has a large nullspace. This behavior reflects the synthetic-aperture principle:
	spatial diversity is generated by viewpoint translation
	\cite{curlander1991synthetic,jakowatz_sar_imaging}.
	
	\paragraph{Degenerate camera geometries.}
	
	The geometric diversity of $\{P_k\}$ collapses under several degeneracies:
	\begin{itemize}
		\item \textbf{Pure rotation:} if $C_k$ is constant, then $P_k = P_{k'}$.
		\item \textbf{Near-zero baseline:} insufficient lateral motion yields
		$P_{k+1}\approx P_k$ and highly correlated measurements.
		\item \textbf{Textureless regions:} if $\theta$ is nearly constant, shifted
footprints produce nearly identical event signals.
	\end{itemize}
	In the first two cases, the effective rank of $AP$ itself is reduced, leading to synthetic-aperture
	collapse and loss of identifiability \cite{carrara_spotlight_sar}. In the
third case, $AP$ remains well-determined, but the resulting event signal is
weak and easily dominated by noise, reducing practical observability.
	
	\paragraph{Algebraic conditioning of $A$, $P$, and $AP$.}
	
	Even when the geometric nullspace is small, the operators $A$ and $P$ are
	individually ill-conditioned due to their spectral behavior.
	
	The temporal differencing operator $A$ annihilates constant signals:
	\[
	A(y + c\mathbf{1}) = Ay,
	\]
	so $\mathbf{1} \in \ker(A)$.

In particular, since the footprint weights are normalized
($\sum_{i}w_{\rho,k,i}=1$, Section~\ref{sec:projection}), $P_k\mathbf{1}_{N_g}=\mathbf{1}_{N_p}$
for every $k$, so $AP(\theta+\alpha\mathbf{1}_{N_g})=AP\theta$ for any scalar
$\alpha$: the ground-domain log--radiance field $\theta$ is identifiable
from event data only up to a global additive constant, unless an
anchoring condition (e.g.\ $\mathbf{1}^\top\theta=0$) or a gauge-fixing
regularizer is imposed.

More generally, $A$ suppresses slowly varying
	temporal components and amplifies rapidly varying ones. For standard
	finite-difference constructions, its singular values vanish at zero frequency
	and increase with temporal frequency \cite{leveque_finite_difference}. Thus $A$
	acts as a temporal high-pass operator.
	
	The projection operator $P$ acts as a spatial averaging operator. Each row
	represents a localized average of $\theta$, so $P$ attenuates high spatial
	frequencies while preserving low-frequency structure. Consequently, $P$ behaves
	as a spatial low-pass operator, leading to decay of singular values at high
	spatial frequencies \cite{tarantola_inv_prob, engl1996regularization}.
	
	The composite operator $AP$ therefore combines spatial smoothing with temporal
	differencing. This suppresses fine spatial detail while amplifying temporal
	variations, producing a wide spread in singular values. In particular,
	\[
	\sigma_{\max}(AP) \le \sigma_{\max}(A)\,\sigma_{\max}(P),
	\]
	the norm of the composite operator is therefore controlled by the norms of
	$A$ and $P$, while small singular values can arise from temporal
differencing, spatial averaging, or geometric lack of diversity.
	Since $AP$ has a nontrivial nullspace (the geometric nullspace discussed
above), $\sigma_{\min}(AP)=0$ exactly; conditioning is therefore more
meaningfully quantified by
	\[
	\mathrm{cond}^+(AP)
	=
	\frac{\sigma_{\max}(AP)}{\sigma_{\min}^+(AP)},
	\]
	where $\sigma_{\min}^+(AP)$ is the smallest positive singular value, i.e.\
conditioning restricted to the orthogonal complement of the nullspace; the
combined smoothing and differencing effects above make this quantity
large, indicating severe ill-conditioning on the identifiable subspace.
	
	\subsection{Regularized Inversion}
	\label{sec:map}
	
	The linearized forward model takes the form
	\[
	AP\theta = b + \eta,
	\]
	where $\eta \in \mathbb{R}^{N_p(N_t-1)}$ is the stacked noise vector introduced
	in Section~\ref{sec:temp_event_model}.
	
	To obtain a stable reconstruction, we combine the data misfit with a
	regularization term that encodes prior information about the spatial structure
	of the scene. This leads to the estimator
	\[
	\widehat{\theta}
	=
	\arg\min_{\theta}
	\left[
	\frac12\|AP\theta - b\|_2^2 + R(\theta)
	\right],
	\]
	where $R(\theta)$ is a regularization functional, such as Total Variation or
quadratic ($L^2$) regularization. Total Variation is implemented via a
smoothed (differentiable) approximation,
$\mathrm{TV}_\epsilon(\theta)=\sum_i\sqrt{|\nabla_h\theta_i|^2+\epsilon^2}$,
compatible with the L--BFGS solver and analytic gradients used in Section~\ref{sec:computational}.
	
	The role of the regularizer is to compensate for both the nullspace and the
	poor conditioning of $AP$. In particular, it suppresses unstable components
	associated with small singular values while promoting physically plausible
	solutions with controlled spatial variation. In the examples below, we
	consider both Total Variation and quadratic ($L^2$) regularization, and the
framework accommodates alternative
	priors depending on the application regime
	\cite{tarantola_inv_prob, engl1996regularization, tikhonov1977solutions}.
	
	\section{Relationship to Synthetic-Aperture Radar}
	\label{sec:sar}
	
	Before turning to computational results, we formalize the SAR analogy invoked in the preceding sections. The comparison is between classical SAR forward operators and the event-based composite model
	\[
	AP\theta = b + \eta,
	\]
	whose rows encode aperture-indexed differences of projected scene measurements. Although the sensing modalities differ—optical radiance versus radar backscatter—the underlying mathematical structure is similar: a moving platform collects data whose measurements at each position depend on how a spatially smoothed scene changes across the aperture. This section describes these connections, emphasizing how the geometry and conditioning of the composite operator $AP$ parallel those of classical SAR operators.
	
	\subsection{Comparison of Forward Models}
	
	In SAR imaging, the measurement at aperture position $k$ is typically modeled as a convolution of the surface reflectivity with a range–azimuth kernel determined by the transmitted waveform and platform geometry \cite{curlander1991synthetic}. Successive radar pulses produce shifted versions of this kernel as the platform moves along-track, and inversion relies on spanning a sufficiently large baseline to ensure coverage of spatial frequencies in the aperture domain \cite{carrara_spotlight_sar}.
	
	Our event-based optical model exhibits an analogous structure at the level of the composite forward operator. The event measurements satisfy
	\[
	(AP\theta)_{\rho,k} = (P_{k+1}\theta - P_k\theta)_\rho.
	\]
	Under the near-nadir regime (Theorem~\ref{thm:nadir}), the projection operators admit the approximation
	\[
	P_k \theta = \widetilde P_k \theta + r_k(\theta),
	\qquad
	\|r_k(\theta)\|_2 \le C\Delta \|\theta\|_2,
	\]
	where $\widetilde P_k$ is obtained by translating a reference footprint operator. Substituting this into the forward model yields
	\[
	(AP\theta)_{\rho,k}
	=
	\bigl((\widetilde P_{k+1} - \widetilde P_k)\theta\bigr)_\rho
	+
	\bigl(r_{k+1}(\theta) - r_k(\theta)\bigr)_\rho.
	\]
	Thus each event measurement is a difference of shifted, spatially smoothed scene samples across adjacent aperture positions, up to a controlled perturbation.
	
	This shift-difference structure is the direct analogue of aperture-indexed SAR measurements: the comparison arises from how the aperture encodes differences across translated views, rather than from an exact convolutional representation. In both systems, sufficient platform motion is required to generate diversity in the rows of the forward operator.
	
	\subsection{Ill-Conditioning and Aperture Diversity}
	
	As in SAR, the invertibility of the composite forward model depends critically on the platform's geometric baseline. If the lateral displacement $\delta_k$ is too small, then adjacent operators $P_k$ become nearly identical, so the corresponding rows of $AP$ lose diversity and many spatial modes of the scene become weakly observable or unobservable.
	
	In SAR, such degeneracy corresponds to poor angular diversity or insufficient Doppler bandwidth \cite{curlander1991synthetic,jakowatz_sar_imaging}. In the present setting, it manifests as an inability to resolve spatial variations in $\theta$ from the shifted-difference measurements generated by $AP$.
	
	Moreover, both SAR and the event-based model exhibit strong low-pass characteristics. The projection operator $P$ smooths $\theta$ spatially through the footprint kernel $\kappa$, analogous to the point-spread function in SAR imaging. In both cases, deblurring amplifies instability and leads to severe ill-conditioning \cite{carrara_spotlight_sar,engl1996regularization}.
	
	The temporal differencing operator $A$ further exacerbates this by eliminating all temporally constant modes. Consequently, the composite operator $AP$ combines spatial low-pass filtering with temporal high-pass filtering, leading to a broad spread in singular values. This mirrors the conditioning challenges encountered in multi-aperture SAR systems \cite{jakowatz_sar_imaging}. The SAR analogy is therefore most faithful at the level of the composite operator $AP$, rather than the projection family $\{P_k\}$ alone.
	
	\subsection{Event Cameras vs.\ SAR Sensors}
	
	Despite their differences, both sensing modalities share key structural features:
	
	\begin{itemize}
		\item \textbf{Shift-difference forward operators.}  
		Both systems generate aperture-indexed measurements from a moving platform observing a common scene through shifted, geometry-dependent kernels. In the event-camera case, the additional temporal differencing induced by the sensor produces a shift-difference operator.
		
		\item \textbf{Dependence on aperture diversity.}  
		Both require platform motion to span sufficient spatial frequencies; reduced baseline leads to degraded conditioning of the measurement operator.
		
		\item \textbf{Intrinsic ill-posedness.}  
		SAR inherits ill-conditioning from convolutional blur and limited aperture; the event-based model combines spatial smoothing with temporal differencing, yielding a structurally similar inverse problem.
	\end{itemize}
	
	The primary distinction lies in the sensing mechanism. SAR actively illuminates the scene and measures coherent returns, capturing phase and Doppler information. Event cameras, by contrast, passively record asynchronous changes in log--radiance, effectively measuring temporal gradients of intensity. Despite these differences, both modalities produce aperture-driven measurement operators whose inversion depends on baseline diversity, blur, and differencing structure. The closest correspondence is therefore between SAR forward maps and the composite operator $AP$.
	
	\section{Computational Results}
	\label{sec:computational}
	
	This section presents computational results obtained using a direct
	implementation of the forward model and regularized estimator developed in
	Sections~\ref{sec:forward_model}--\ref{sec:map}. Rather than relying on idealized
	operators, all experiments use explicit constructions of the temporal
	differencing operator, motion-induced alignment, and projection matrices,
	closely reflecting how the method would be deployed in practice.
	
	The key distinction from the dynamic inverse formulation of \cite{antil2024drnd}
	is the reconstruction variable. The method in \cite{antil2024drnd} recovers a
	latent time-dependent image sequence, whereas the present work solves directly
	for the static ground-domain log--radiance field $\theta$ under the composite
	model $AP\theta=b+\eta$. The synthetic-aperture reconstructions shown here are
	therefore specific to the $\theta$-based formulation, while the dynamic
	reconstruction method of \cite{antil2024drnd} is included only as a baseline
	in the real-data comparison.
	
	\subsection{Computational Pipeline}
	\label{sec:pipeline}
	
	The computational pipeline consists of four main stages:
	(i) event loading and dynamic temporal binning,
	(ii) motion-induced alignment of event coordinates,
	(iii) construction of the projection operators $\{P_k\}$,
	and (iv) numerical solution of the regularized optimization problem.
	Each stage corresponds directly to a component of the analytical model.
	
	Event loading and temporal binning are performed jointly, rather than using
	fixed-duration bins.  Instead, events are dynamically grouped so that each bin
	corresponds to an estimated one-pixel horizontal shift of the underlying scene
	relative to the previous bin.  This binning strategy is designed to enforce the
	translated-footprint geometry assumed in the forward model and thereby produce
	a tractable approximation of the composite operator $AP$.
	
	Specifically, events are accumulated over candidate temporal windows and
	projected onto the sensor grid. For each candidate bin boundary, a small set of
	integer horizontal shifts is evaluated by applying the shift to the accumulated
	events and computing a contrast metric on the resulting image. A bin is closed
	when the shift that maximizes contrast corresponds to a one-pixel translation
	relative to the previous bin. This procedure yields a sequence of bins whose
	effective motion is approximately uniform and consistent with the
	translated-footprint model analyzed in Section~\ref{sec:forward}.
	
	This one-pixel shift binning also provides a direct and computationally
	efficient approximation of the projection operators $\{P_k\}$, and hence of the
	composite forward map $AP$. Under near-nadir motion, Theorem~\ref{thm:nadir}
	shows that each $P_k$ is well approximated by a translated convolution
	operator. By dynamically selecting bins such that successive event
	accumulations correspond to estimated integer-pixel translations, the action of
	$P_k$ is discretized as a sparse shift operator on the ground-domain grid.
	
	In implementation, this approximation replaces continuous footprint motion with
	piecewise-constant integer translations, so that each $P_k$ acts as a sparse
	shift operator on $\theta$ while masking enforces a fixed reconstruction
	domain. Applying the temporal differencing operator $A$ to this approximated
	projection sequence yields a tractable realization of the composite operator
	$AP$. The resulting operators capture the dominant geometric effect of platform
	motion while avoiding the computational cost and modeling uncertainty
	associated with subpixel footprint interpolation.
	
	\subsection{Synthetic Reconstruction Experiments}
	\label{sec:synthetic}
	
	Synthetic experiments are designed to isolate and validate the
	shift-difference behavior of the composite forward operator under controlled,
	simulator-generated conditions.  A ground-truth log--radiance map
	$\theta^\star \in \mathbb{R}^{240\times180}$ is defined on a fixed ground-domain
	grid.  The platform's horizontal position $\xi_k$ is set to a pure
	rightward translation, with $h_k\equiv h_1$ and $R_k\equiv R_1$:
	\[
	\xi_k = (k-1)\ \text{pixels}, \qquad k = 1,\dots,101.
	\]
	By the definition $\delta_k:=\xi_k-\xi_1$ from Theorem~\ref{thm:nadir}, this
makes $P_k\theta^\star$ exactly the rightward translation of $\theta^\star$ by
$\delta_k=(k-1)$ pixels, with $P_1$ as the unshifted reference frame and
	no additional geometric deformation.
	
	Synthetic event measurements are generated using the open-source
	v2e event camera simulator \cite{v2e},
	which converts image sequences into asynchronous event streams according to a
	standard contrast-threshold event model. The translated projected views induced
	by $\theta^\star$ are provided to the simulator, and events are generated from
	successive frames corresponding to known one-pixel horizontal shifts.
	
	For consistency with the analytical model, the resulting event stream is then
	binned and accumulated to form signed polarity counts satisfying
	\[
	P_{k+1}\theta^\star - P_k\theta^\star = b_k + \eta_k, \qquad k = 1,\dots,100,
	\]
	where $b_k,\eta_k \in \mathbb{R}^{N_p}$ denote the pixel-stacked measurement and
	error vectors for the $k$th temporal bin, so that the stacked measurements
	satisfy the composite forward model $AP\theta^\star = b + \eta$.
	All noise and thresholding options were kept at their defaults provided by the v2e software; in particular, the nominal ON/OFF contrast thresholds were left at $0.2$ in log-intensity units for both polarities.

	In implementation, the dynamic binning and projection pipeline described in Section~\ref{sec:pipeline} reduces to a known, fixed sequence of one-pixel
	shifts.  Each projection operator $P_k$ is therefore implemented as a sparse
	shift matrix acting on the static log--radiance estimate $\theta$. To enforce a
	consistent reconstruction domain across all shifts, pixels that leave the field
	of view over the motion sequence are masked, ensuring that all operators act on
	a common support.
	
	Reconstruction is performed by minimizing the regularized objective from
	Section~\ref{sec:map}. Optimization is carried out using L--BFGS with analytic
	gradients. Figure~\ref{fig:synthetic_results} shows that the recovered log--radiance
	closely matches the ground truth. These results confirm that, when the
	shift-difference composite model is satisfied exactly, the implemented
	operator faithfully recovers the underlying scene structure from temporal
	differences alone. The comparatively modest PSNR should be interpreted with
	care: the event-only model constrains temporal changes and relative spatial
	contrast more directly than exact absolute log--radiance values. As a result, strong
	structural agreement can coexist with lower pixelwise log--intensity agreement,
	which is consistent with the high SSIM observed here.

	\begin{figure}[h!]
		\centering
		\subfloat{\includegraphics[width=0.31\linewidth]{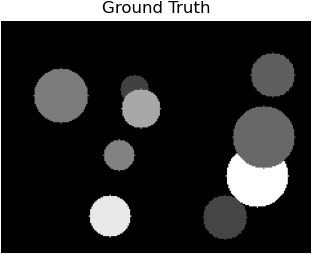}}
		\subfloat{\includegraphics[width=0.31\linewidth]{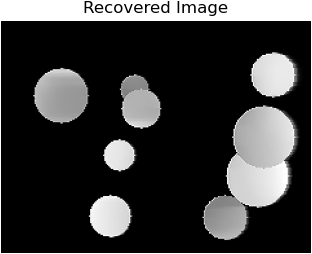}}
		\raisebox{-4.3 pt}{\subfloat{\includegraphics[width=0.356118\linewidth]{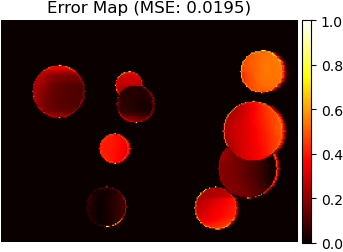}}}
		\caption{
			Synthetic reconstruction experiment.
			\textbf{Left:} ground-truth log--radiance $\theta^\star$.
			\textbf{Center:} reconstructed log--radiance estimate $\widehat{\theta}$.
			\textbf{Right:} squared error map.
			The reconstruction achieves PSNR $=17.11\,\mathrm{dB}$ and
			SSIM $=0.9128$. SSIM remains high even though PSNR is modest, reflecting
			that the event-only model recovers relative spatial structure and contrast
			more reliably than exact absolute log--radiance values.
		}
		\label{fig:synthetic_results}
	\end{figure}

	\subsection{Real Event-Camera Data Under Near-Nadir Motion}
	\label{sec:realdata}
	
	We next apply the proposed reconstruction framework to real event-camera data
	collected by the \emph{Falcon Neuro} neuromorphic payload aboard the International Space Station (ISS) \cite{mcharg2022falcon}.  The dataset consists of
	asynchronous polarity events recorded during nominal ISS operations, with the
	sensor viewing the Earth in a near-nadir configuration.  Platform motion induces lateral translation of the scene across the sensor, providing the aperture
	diversity required for synthetic-aperture reconstruction.
	
	Event loading and temporal binning are performed dynamically, following the
	procedure described in Section~\ref{sec:pipeline}.  Rather than fixing the
	bin duration, events are accumulated until the contrast-maximizing horizontal
	shift between successive accumulations corresponds to an estimated one-pixel
	translation.  This produces a sequence of bins whose effective motion is
	approximately uniform and directly compatible with the translated-footprint forward model.
	
	For each bin, events are aligned by compensating for the estimated shift, mapping all measurements into a common ground-referenced coordinate system.  This alignment step reduces the effect of residual motion within bins and ensures that each projection operator $P_k$ acts on a fixed reconstruction domain. The operators $\{P_k\}$ are then constructed as sparse matrices encoding the association between aligned sensor pixels and ground-domain cells, approximating the translated convolution operators predicted by Theorem~\ref{thm:nadir}.
	
	Reconstruction is performed by minimizing the regularized objective derived in
	Section~\ref{sec:map}. Optimization is carried out using an L--BFGS solver with
	analytic derivatives.
	
	Figure~\ref{fig:synthetic_comparison} shows a synthetic comparison on the same
	one-pixel-shift experiment described in Section~\ref{sec:synthetic}. In that
	figure, the proposed method is compared against FIRENET and E2VID using data
	generated from the known translated projection sequence. Figure~\ref{fig:panama_reconstruction}
	shows the corresponding comparison on real Falcon Neuro data, where the
	dynamic reconstruction of \cite{antil2024drnd} is included as an additional
	baseline because it reconstructs a time-dependent image sequence rather than
	the static scene parameter $\theta$. The synthetic-aperture reconstruction exhibits coherent
	large-scale spatial structure, with clear delineation of scene geometry and
	consistent extended boundaries, despite relying solely on
	event-based temporal measurements. Fine-scale texture and high-frequency detail
	are largely suppressed, consistent with the effects of spatial footprint
	averaging, temporal differencing, and regularization inherent to the
	formulation. In the real-data comparison shown in
	Figure~\ref{fig:panama_reconstruction}, the reconstruction in panel 1 is an
	$L^2$-regularized estimate computed on the native $240\times180$ grid rather
	than the super-resolved ablation setting. In contrast, the dynamic
	reconstruction shows similar large-scale geometry but with increased spatial
	variation and residual artifacts.
	Learning-based methods (FIRENET and E2VID) recover fine-scale texture and local
	contrast, but also introduce granular noise, streaking, and spatially
	inconsistent intensity patterns, particularly in regions of dense event
	activity. Overall, the comparison highlights the ability of the
	synthetic-aperture approach to recover stable, geometrically aligned log--radiance
	structure while trading off high-frequency texture for robustness and spatial
	coherence.
	
	These results demonstrate that dynamic one-pixel binning and shift-based
	approximations of $P_k$ are sufficient to recover meaningful spatial
	information from real spaceborne event data in the native-resolution
	reconstruction setting shown here, supporting the practical applicability of
	the synthetic-aperture framework.
	
	In these experiments, the number of sensor pixels $N_p$ is fixed by the event
	camera, whereas the number of unknown ground-domain cells $N_g$ is a user-chosen
	discretization parameter for the log--radiance field $\theta$. The formulation does
	not require $N_g=N_p$. In the super-resolved ablation study below, we choose
	$N_g \gg N_p$, so that $\theta$ is reconstructed on a grid finer than the
	native image resolution. The corresponding projection operators $P_k$ map each
	sensor pixel to a localized footprint on this finer grid, implemented here with
	a $5\times5$ kernel. In this sense, super-resolution arises from solving for a
	higher-resolution scene representation under the same event measurements, rather
	than from modifying the event sensing model itself.

	\begin{figure}[!h]
		\centering
		{\subfloat{\includegraphics[width=0.24\linewidth]{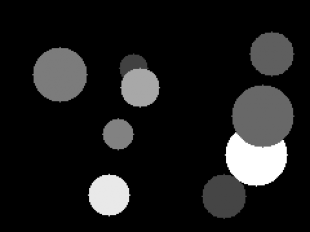}}}
		{\subfloat{\includegraphics[width=0.24\linewidth]{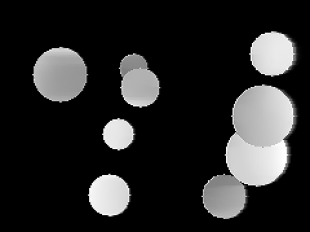}}}
		\subfloat{\includegraphics[width=0.24\linewidth]{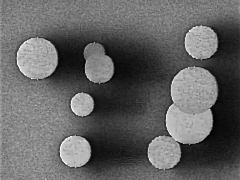}}
		\hspace{.015in}
		\subfloat{\includegraphics[width=0.24\linewidth]{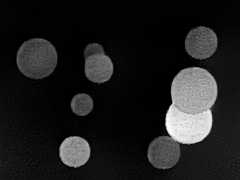}}
		
		\caption{
			Comparison on the same synthetic one-pixel-shift experiment described in
			Section~\ref{sec:synthetic}. From left to right:
			\textbf{1.} Ground truth log--radiance.
			\textbf{2:} Reconstruction obtained using the event-based synthetic-aperture
			(SAR-style) formulation developed in this work.
			\textbf{3:} Reconstruction obtained using the FIRENET method by \cite{Scheerlinck20wacv}.
			\textbf{4:} Reconstruction obtained using the E2VID method by \cite{Rebecq19pami}.}
		\label{fig:synthetic_comparison}
	\end{figure}

	\begin{figure}[!h]
		\centering
		\raisebox{-1.5pt}{\subfloat{\includegraphics[width=0.2465\linewidth]{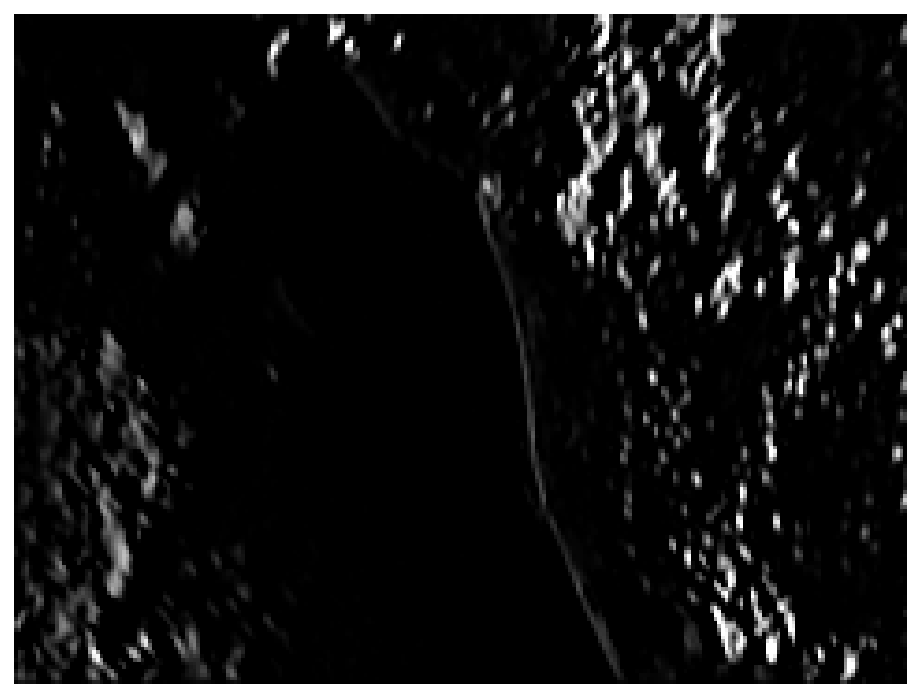}}}
		\subfloat{\includegraphics[width=0.24\linewidth]{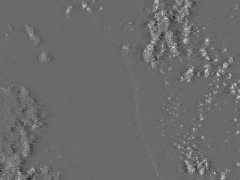}} 
		\hspace{.02in}
		\subfloat{\includegraphics[width=0.2402\linewidth]{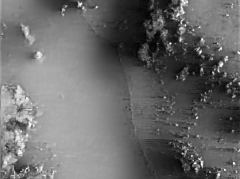}}
		\hspace{.02in}
		\subfloat{\includegraphics[width=0.2402\linewidth]{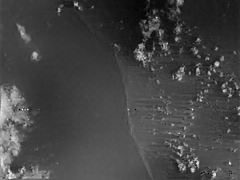}}

		\caption{
			Final reconstructed log--radiance maps from real event-camera data collected
			under near-nadir motion. From left to right: 
			\textbf{1:} Reconstruction obtained using the event-based synthetic-aperture
			(SAR-style) formulation developed in this work, using $L^2$
			regularization on the native $240\times180$ grid.
			\textbf{2:} Reconstruction obtained using the dynamic reconstruction method
			introduced by \cite{antil2024drnd}.
			\textbf{3:} Reconstruction obtained using the FIRENET method by \cite{Scheerlinck20wacv}.
			\textbf{4:} Reconstruction obtained using the E2VID method by \cite{Rebecq19pami}.}
		\label{fig:panama_reconstruction}
	\end{figure}
	
	\subsection{Regularization Ablation Study}
	\label{sec:ablation}
	
	To assess the effect of the regularization choice and parameter magnitude, we
	compare reconstructions obtained with Total Variation (TV) and quadratic
	$L^2$ regularization over a representative range of $\lambda$ values. The
	resulting ablation is shown in Figure~\ref{fig:ablation_study}, with rows
	indexed by $\lambda$ and columns corresponding to the two regularizers. In all
	ablation runs, the reconstruction is super-resolved to $1920\times1440$ and
	uses a $5\times5$ projection kernel.
	
	Across the sweep, small values of $\lambda$ leave the inversion weakly
	regularized, producing reconstructions with greater local variation and reduced
	spatial coherence. As $\lambda$ increases, both models suppress these
	artifacts, but they do so differently: TV better preserves sharp transitions
	and extended boundaries, whereas $L^2$ regularization produces smoother images
	with more diffuse edges. This comparison highlights the tradeoff between edge
	preservation and smoothing, and should be interpreted separately from the
	native-resolution real-data comparison shown in
	Figure~\ref{fig:panama_reconstruction}.
	
	\begin{figure}[H]
		\centering
		\footnotesize
		\setlength{\tabcolsep}{4pt}
		\renewcommand{\arraystretch}{0.95}
		\begin{tabular}{c c c}
			$\lambda$ & TV & $L^2$ \\
			$0.10$ &
			\includegraphics[width=0.23\linewidth]{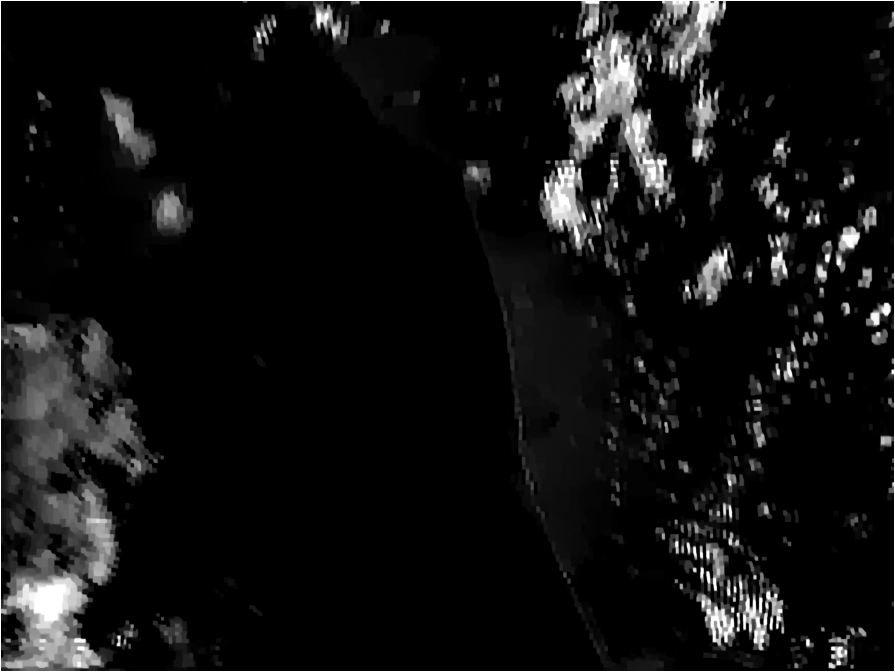} &
			\includegraphics[width=0.23\linewidth]{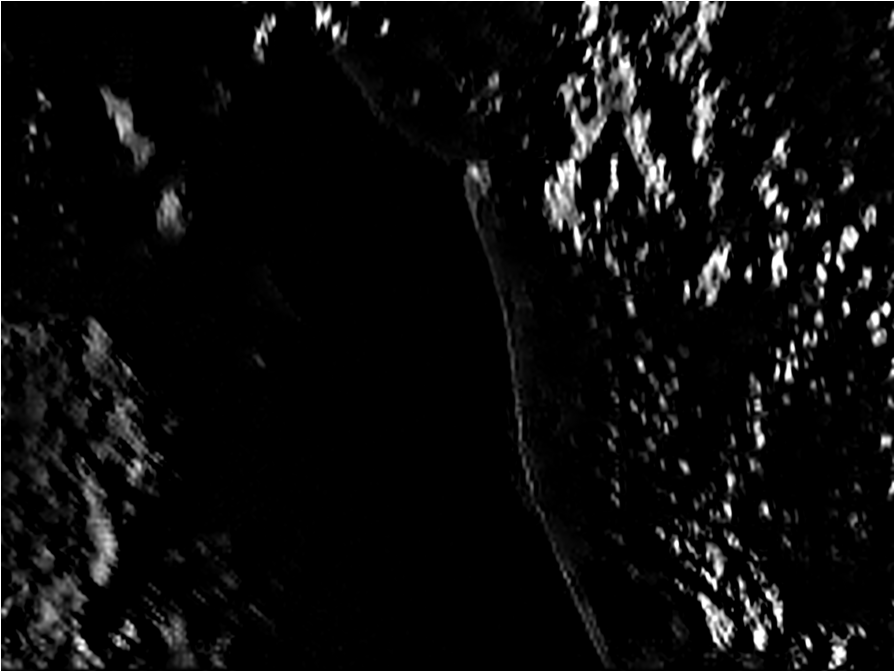} \\
			$0.20$ &
			\includegraphics[width=0.23\linewidth]{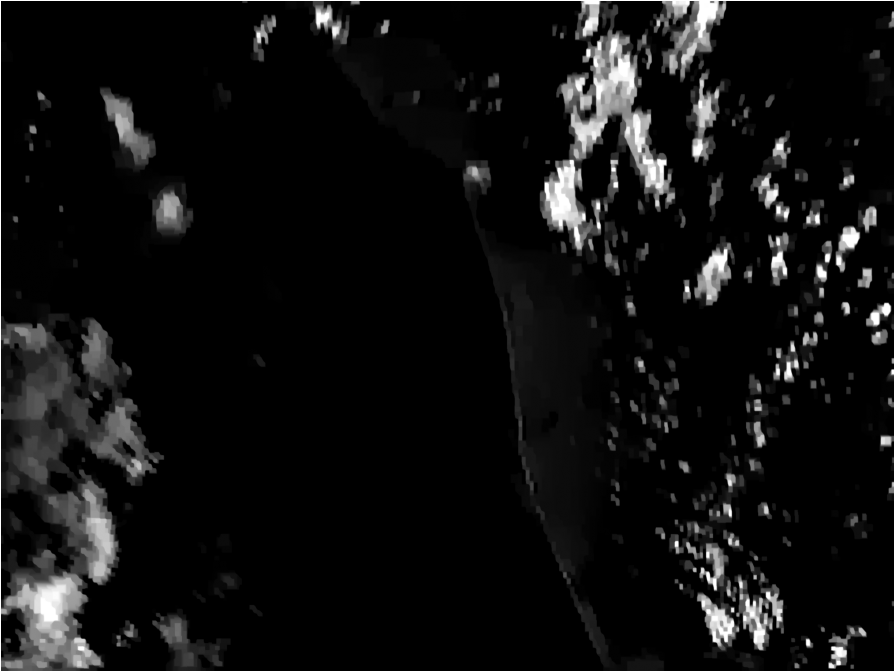} &
			\includegraphics[width=0.23\linewidth]{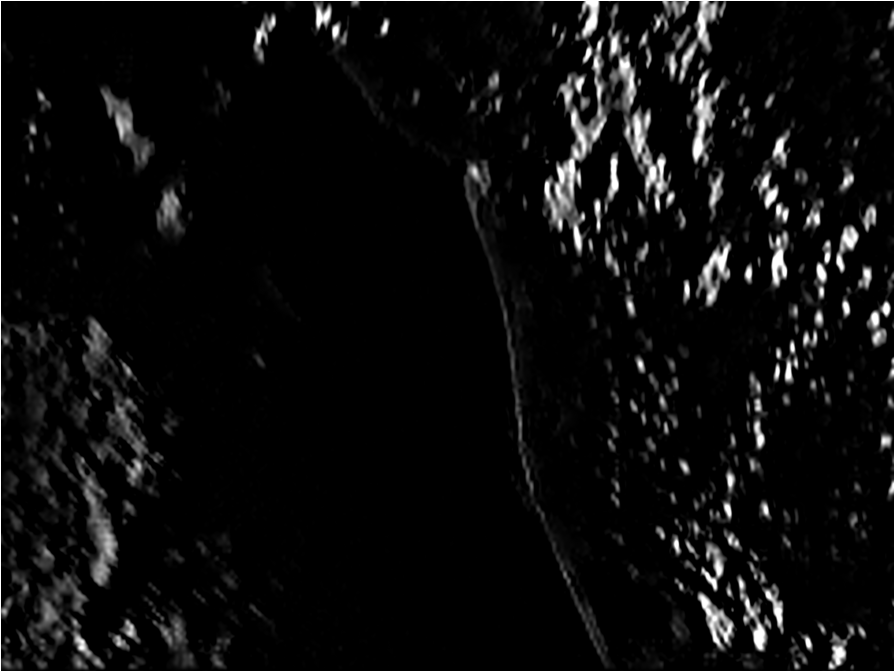} \\
			$0.30$ &
			\includegraphics[width=0.23\linewidth]{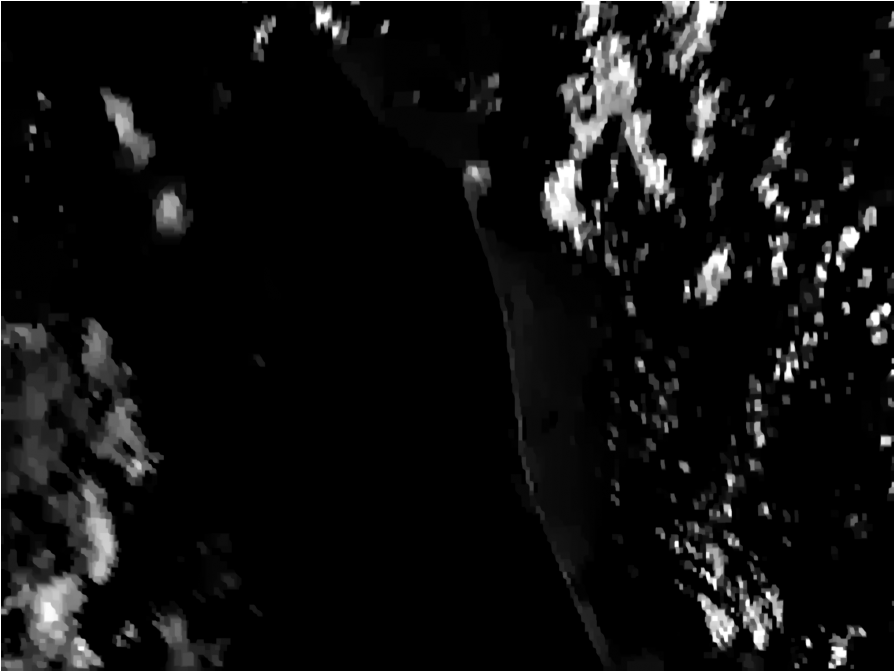} &
			\includegraphics[width=0.23\linewidth]{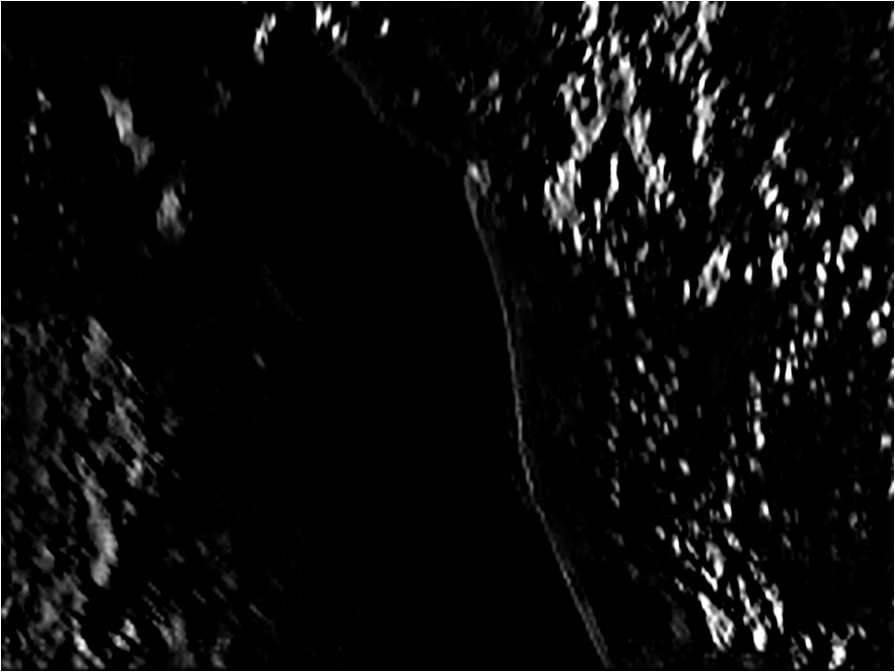} \\
			$0.40$ &
			\includegraphics[width=0.23\linewidth]{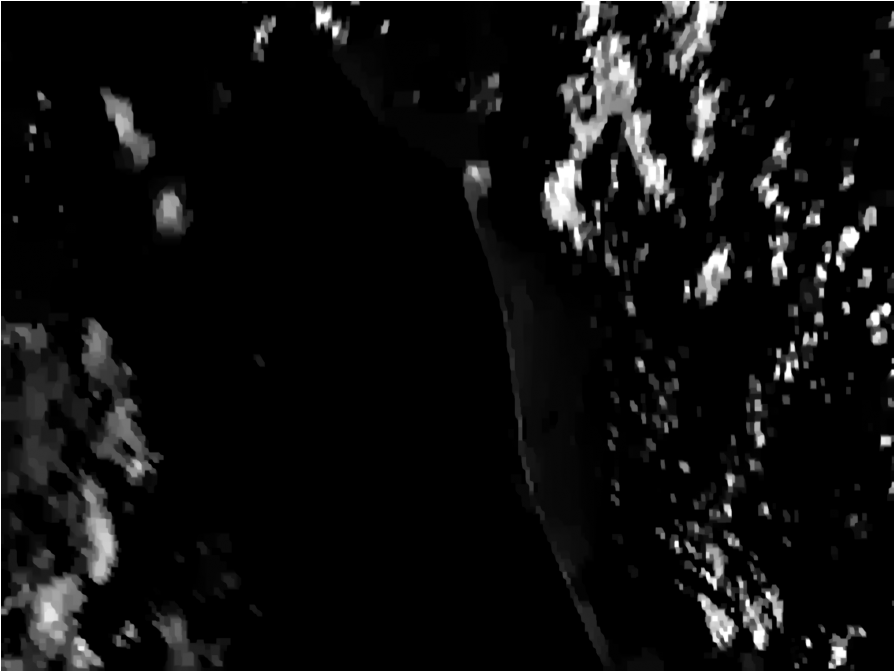} &
			\includegraphics[width=0.23\linewidth]{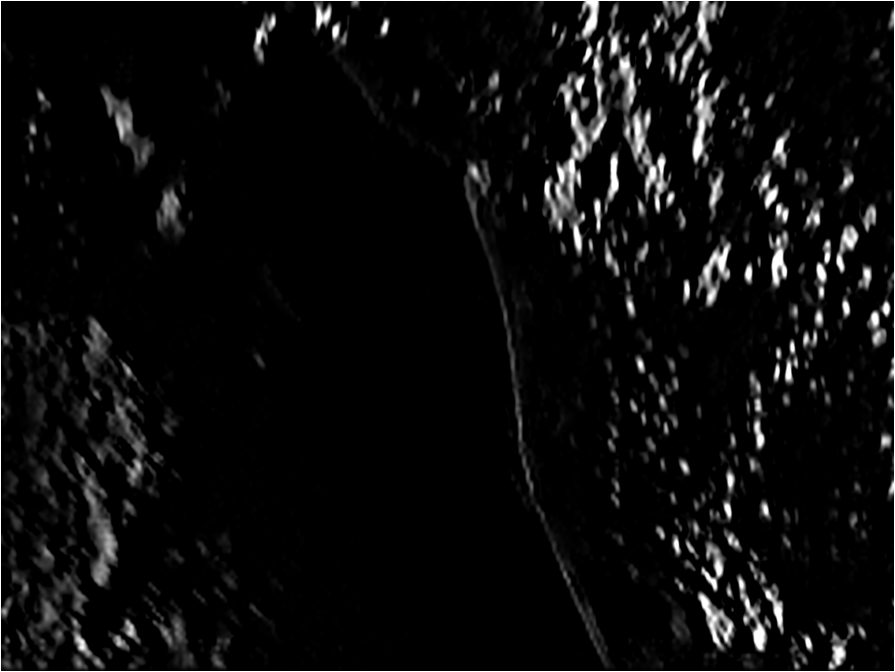} \\
			$0.50$ &
			\includegraphics[width=0.23\linewidth]{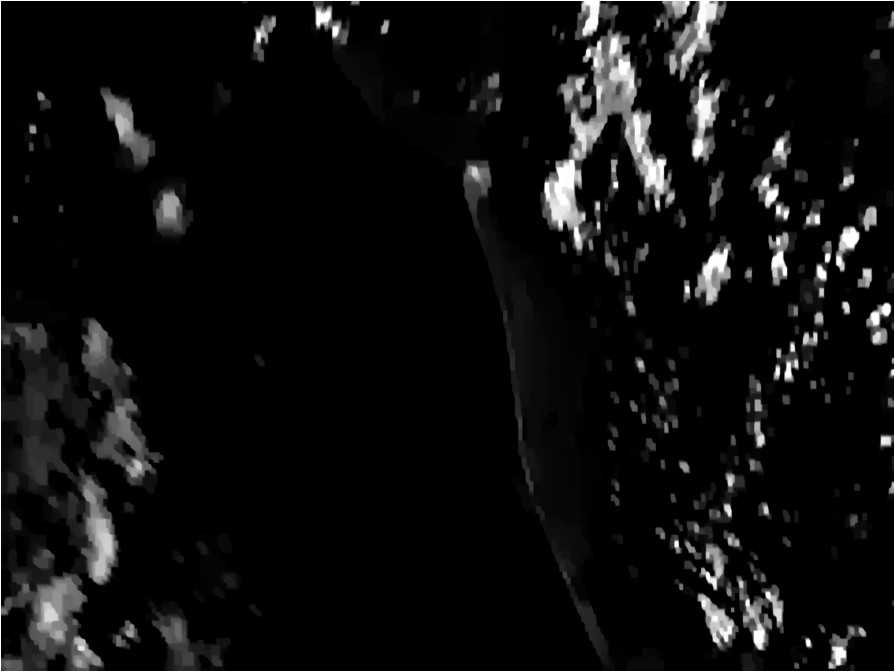} &
			\includegraphics[width=0.23\linewidth]{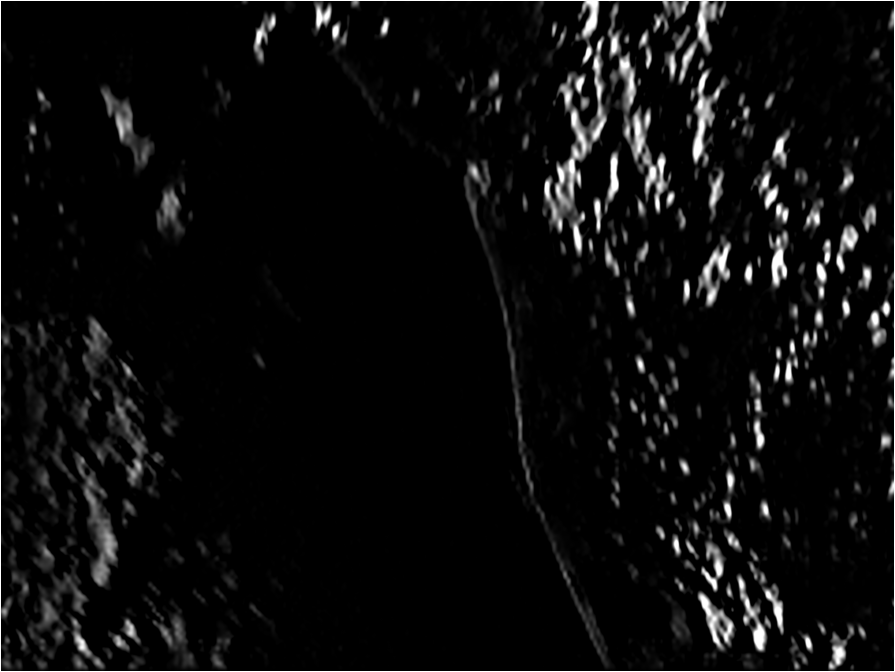} \\
		\end{tabular}
		\caption{
			Regularization ablation study. Rows correspond to the regularization
			parameter $\lambda \in \{0.1,0.2,0.3,0.4,0.5\}$, while columns compare
			TV and quadratic $L^2$ regularization on the same reconstruction task.
			All reconstructions are super-resolved to $1920\times1440$ using a
			$5\times5$ kernel. As $\lambda$ increases, both models become smoother,
			with TV preserving sharper scene boundaries and $L^2$ producing more
			diffuse transitions.
		}
		\label{fig:ablation_study}
	\end{figure}
	
	\section{Conclusion \& Future Work}
	
	\subsection{Conclusion}
	This work developed a synthetic-aperture interpretation of monocular
	event-camera sensing by formulating event measurements as the action of a
	composite linear operator $AP$ on a static ground-domain log--radiance field.
	By explicitly separating temporal differencing from spatial projection, the
	resulting forward model exposes a shift-difference operator analogous to those
	arising in classical synthetic-aperture radar.  Under near-nadir motion, this
	structure yields aperture diversity through lateral platform translation,
	providing a principled explanation for spatial identifiability from purely
	event-based measurements.
	
	We showed that the composite operator $AP$ is inherently ill-conditioned due to
	both temporal high-pass filtering and spatial low-pass averaging. These
	properties motivate a regularized estimator for the event-based setting.
	Theoretical analysis clarified the roles of motion and baseline in
	determining conditioning and reconstruction
	quality.
	
	Computational experiments on both synthetic data and real spaceborne event
	imagery demonstrate that the proposed framework can recover coherent spatial
	structure using only asynchronous brightness-change measurements.
	The results confirm that sufficient lateral motion induces the same type of
	aperture diversity exploited in SAR, while regularization provides stability
	in the presence of severe operator ill-conditioning.
	
	\subsection{Future Work}
	
	The synthetic-aperture formulation developed in this work opens several
	theoretical and practical directions for further investigation.  These
	extensions naturally build on the operator-based perspective and address
	limitations inherent in monocular, event-only sensing. The directions below
	range from immediate extensions of the current framework to broader
	scene-representation questions suggested by the results.
	
	\paragraph{Digital twin through multi-view and multi-sensor extensions.}
	A natural extension of the operator-based formulation developed here is the
	integration of a persistent digital twin as the latent scene representation.
	In this setting, the unknown scene parameter is promoted from a two-dimensional log--radiance map to a three-dimensional, measurement-predictive model that is shared across repeated or distributed observations. Rather than reconstructing
	independent images for each collection, new event measurements are used to test and refine a common scene hypothesis through the forward model
	$b^{(m)} = A^{(m)} P^{(m)} \theta$.
	
	When measurements are acquired from multiple sensors or passes with distinct
	viewing geometries, the framework naturally supports joint estimation over all
	views. Each sensor contributes its own projection operator, while the underlying scene representation remains shared, enabling depth and vertical structure to be constrained through cross-view consistency. Platform motion further enriches the available viewpoints, allowing three-dimensional scene structure to emerge from the collective measurements without requiring explicit correspondence or
	pairwise matching.
	
	\paragraph{Super-resolution.}
	The formulation developed in this work supports super-resolution, as
demonstrated empirically in Section~\ref{sec:ablation}, though this
capability has not been fully explored.
	The projection operators $P_k$ encode time-varying pixel
	footprints induced by sensor motion, so that measurements at different times
	sample overlapping but nonidentical regions of the underlying scene. When the ground log--radiance field is discretized on a grid finer than the native image resolution, this overlap introduces additional constraints that support the recovery of sub-pixel spatial detail. From an inverse problems perspective, such super-resolution behavior arises from the structure of the forward operator $AP$ rather than from any modification of the event sensing model. A detailed analysis of identifiability, stability, and achievable resolution gains under different motion patterns and regularization choices is left for subsequent work.

	\paragraph{Nonlinear and non-nadir imaging regimes.}
	The approximate shift-invariance of $P_k$ under near-nadir motion plays a central role in the synthetic-aperture interpretation.
	Extending the analysis to more general flight geometries, wide field-of-view
	sensors, and non-planar terrain remains an important open problem.
	Characterizing how departures from the nadir assumptions affect identifiability and conditioning will be an important step toward extending event-based synthetic-aperture methods to more complex environments.
	
	\bibliographystyle{ieeetr}
	\bibliography{references}
\end{document}